\documentclass[11pt]{article}
\usepackage[margin=1in]{geometry}
\usepackage[T1]{fontenc}
\usepackage{amsmath,amssymb,amsthm}
\usepackage{array,booktabs}
\usepackage{microtype}
\usepackage{tikz}
\usetikzlibrary{arrows.meta}
\usepackage[hidelinks]{hyperref}
\hypersetup{
  pdftitle={AID: A Framework for AI Infrastructure Dynamics},
  pdfauthor={Abi Aryan},
  pdfsubject={State representation and learning for coupled AI inference infrastructure},
  pdfkeywords={AI infrastructure, dynamical systems, state abstraction, partial observability, digital twins}
}
\newtheorem{proposition}{Proposition}
\title{AID: A Framework for AI Infrastructure Dynamics}
\author{Abi Aryan\\
Joule (\href{https://joule.lat}{joule.lat})\\
\small Correspondence: \href{mailto:res@abiaryan.com}{\texttt{res@abiaryan.com}}}
\date{}
\begin{document}
\maketitle

\begin{abstract}
A useful representation of AI inference infrastructure must preserve the state variables and interactions that determine future service under changing operating conditions. We introduce AID (AI Infrastructure Dynamics), a framework for representing AI infrastructure as a coupled system spanning physical, computational, networking, and serving processes. The formulation represents system state, observations, actions, disturbances, persistent configuration, and service outcomes, while allowing structured and variable-size state, asynchronous observations, multiple physical timescales, and demand that responds to service. We distinguish representations that support forecasting under an existing policy from those that preserve service-relevant behavior under changed actions, and separate both from identification of intervention responses. Two analytical results characterize a prediction-error lower bound under observational ambiguity and a sufficient condition for exact controlled state reduction. These results apply established information and state-abstraction principles to the AID formulation. We then define validation criteria for evaluating whether candidate representations preserve information relevant to cache structure, workload history, measurement availability, and imposed actions. This paper presents the AID formulation and validation criteria; it does not report empirical facility-level validation.
\end{abstract}

\section{Introduction}
A request's latency depends on both external demand and the infrastructure's current ability to serve it. That ability may depend on cache structure, request progress, network contention, resource allocation, and physical conditions that aggregate telemetry does not fully describe. Dynamic KV-memory management, iteration-level scheduling, and thermal- and power-aware serving illustrate how these layers interact~\cite{kwon2023,yu2022,stojkovic2025}.

We introduce AID (AI Infrastructure Dynamics), a framework for representing AI infrastructure as a coupled system. AID formalizes the plant through a state spanning physical, computational, networking, and serving processes, together with the observations available to an observer, the actions applied to the system, external disturbances, persistent configuration and operating constraints, and service outcomes. The state may contain structured and variable-size objects, and its evolution may occur across multiple physical timescales and operating regimes.

This representation provides a basis for asking what information a learned state must preserve. A representation may predict future behavior accurately under the policy that generated its training data while omitting information needed when routing, batching, resource allocation, power control, or workload changes. Prediction under one policy therefore does not by itself establish that the representation preserves the dynamics relevant to changed actions.

Figure~\ref{fig:aid-boundary} summarizes the system boundary and its cross-layer interactions.

\begin{figure}[tbp]
\centering
\begin{tikzpicture}[
  font=\small,
  aidbox/.style={draw, rounded corners=2pt, align=center, fill=black!3,
    text width=2.5cm, minimum height=0.75cm, inner sep=4pt},
  flow/.style={-{Latex[length=2mm]}, semithick},
  coupling/.style={{Latex[length=1.6mm]}-{Latex[length=1.6mm]}, thin}
]
\draw[rounded corners=3pt, semithick] (-3.65,-2.05) rectangle (3.65,2.05);
\node[font=\small\bfseries] at (0,1.8) {AI infrastructure: coupled state $\boldsymbol{x}_t$};
\node[aidbox] (demand) at (-5.55,0.2) {External demand\\and client rules};
\node[aidbox] (service) at (5.55,0.2) {Service output\\$\boldsymbol{r}_t$};
\node[aidbox] (requests) at (0,1.0) {Requests / serving};
\node[aidbox] (compute) at (-1.9,-0.25) {GPU compute};
\node[aidbox] (cache) at (1.9,-0.25) {Memory / cache};
\node[aidbox] (network) at (-1.9,-1.35) {Network};
\node[aidbox] (physical) at (1.9,-1.35) {Power / thermal state};
\draw[coupling] (requests.south west) -- (compute.north);
\draw[coupling] (requests.south east) -- (cache.north);
\draw[coupling] (compute) -- (cache);
\draw[coupling] (compute) -- (network);
\draw[coupling] (cache) -- (physical);
\draw[coupling] (network) -- (physical);
\draw[flow] (demand.east) -- (-3.65,0.2);
\draw[flow] (3.65,0.2) -- (service.west);
\node[align=center] (actions) at (-1.75,2.8) {Actions $\boldsymbol{a}_t$};
\node[align=center] (telemetry) at (1.75,2.8) {Telemetry $\boldsymbol{o}_t$};
\draw[flow] (actions.south) -- (-1.75,2.05);
\draw[flow] (1.75,2.05) -- (telemetry.south);
\node[aidbox, text width=4.4cm, fill=white] (clients) at (0,-2.95)
  {Closed-loop client state $\boldsymbol{c}_t$\\retries / session progression};
\draw[flow, dashed] (service.south) |- (clients.east);
\draw[flow, dashed] (clients.west) -| (demand.south);
\end{tikzpicture}
\caption{The AID system boundary. Solid arrows show demand, actions, observations, service, and schematic cross-layer coupling, not an exhaustive dependency graph or a serial execution pipeline. The dashed path represents service-dependent arrivals through client state; it is absent in open-loop replay. External intent and client rules are distinguished from realized demand.}
\label{fig:aid-boundary}
\end{figure}

The central learning question is:
\begin{quote}
\textbf{Which history-based representations preserve the distributions relevant to future service under admissible interventions?}
\end{quote}
This question connects the AID representation to partially observed sequential decision-making~\cite{kaelbling1998}, state estimation~\cite{sarkka2023}, controlled state abstraction~\cite{givan2003,li2006}, and causal identification~\cite{pearl2009}. We do not prescribe a particular predictive architecture. Instead, we specify the system representation, information boundary, policy class, and service criteria under which a candidate representation can be evaluated.

We develop the formulation in four parts:
\begin{enumerate}
\item \textbf{A coupled state-space formulation} for structured state, physical time, information availability, and demand that responds to service.
\item \textbf{A sufficiency framework} distinguishing forecasting, controlled-state reduction, and intervention identification for a specified policy class and available information.
\item \textbf{Two analytical diagnostics} characterizing observational ambiguity and exact controlled reduction, including a simple cache example.
\item \textbf{A validation protocol} connecting the formulation to controlled comparisons of state representations, observations, workloads, and imposed actions.
\end{enumerate}
Our contribution is to formulate an explicit dynamical representation for AI infrastructure and to specify criteria for evaluating whether a learned representation preserves the information relevant to prediction and changed actions. The framework does not introduce a new estimator, representation-learning algorithm, or empirically validated facility model; instead, it provides the system representation and analytical criteria needed to evaluate such models.

\section{Related Work}
\label{sec:related-work}
\subsection{Partial Observation and State Estimation}
Belief states in partially observable Markov decision processes summarize uncertainty from action and observation histories under a specified model~\cite{kaelbling1998}. Bayesian filtering and smoothing provide methods for estimating latent state under different assumptions, including linear-Gaussian models, nonlinear approximations, and particle-based methods~\cite{sarkka2023}.

We use this distinction between telemetry and latent state while also asking whether the chosen system boundary, available observations, and assumed dynamics are adequate. Computing a state estimate does not by itself show that the state can be inferred uniquely or that the dynamics are identifiable. A representation may support prediction under one policy while omitting information needed after an intervention.

\subsection{Controlled State Abstraction}
Givan et al. develop equivalence notions and model minimization for Markov decision processes using stochastic bisimulation~\cite{givan2003}. Li et al. distinguish classes of state abstraction according to the properties they preserve~\cite{li2006}. These results motivate evaluating a compressed state by the decisions it is intended to support, rather than only by reconstruction accuracy.

In Section~\ref{sec:diagnostics}, we apply this reduction principle to the joint distribution of the next state and service output. Preserving this joint distribution lets us compare future service under a common policy. The proof uses a standard induction argument, rather than introducing a new bisimulation theorem.

\subsection{Inference-Serving Mechanisms}
Inference infrastructure is not limited to autoregressive language-model serving. Clockwork builds a distributed DNN-serving system around predictable execution and request-level latency objectives~\cite{gujarati2020}. It provides an example of how execution models, resource management, and scheduling interact to produce service outcomes. Its evaluated setting differs from autoregressive LLM serving, where evolving request progress and cache state introduce additional representation questions.

Orca's iteration-level scheduling motivates request progress and evolving batch membership as serving-state variables~\cite{yu2022}. vLLM implements PagedAttention to manage KV memory in blocks, connecting serving decisions to allocation and sharing state~\cite{kwon2023}. SGLang's RadixAttention organizes cached prefixes for reuse, making prefix identity and structure relevant beyond total occupancy~\cite{zheng2024}. Sarathi-Serve uses chunked prefill to coordinate prompt processing and token generation, illustrating why execution phase and scheduling history matter to service~\cite{agrawal2024}.

\subsection{Disaggregation, Placement, and Execution Resources}
DistServe separates prefill and decoding across GPUs and chooses phase-specific resource allocation, parallelism, and placement with communication bandwidth in view~\cite{zhong2024distserve}. Splitwise likewise separates the phases across machines, using their different compute, memory, and power characteristics to design homogeneous and heterogeneous clusters~\cite{patel2024splitwise}. These mechanisms motivate a state that records which phase owns the work, where that work executes, and how state transfer interacts with available resources. Phase-dependent demand is not by itself evidence of a latent regime with distinct transition dynamics.

Mooncake treats KV data as a distributed resource, using CPU, DRAM, SSD, and network resources for a disaggregated cache alongside separate prefill and decoding clusters~\cite{qin2025mooncake}. Together, SGLang, Mooncake, and DistServe motivate three related state questions: what is cached, where it resides, and which execution phase and resource currently own the work~\cite{zheng2024,qin2025mooncake,zhong2024distserve}. A representation intended to support changed routing or placement must be tested against these distinctions, rather than assumed sufficient because it predicts aggregate occupancy.

TensorRT-LLM provides a production-oriented implementation reference for optimized inference, including in-flight batching, paged attention, and multi-GPU or multi-node execution~\cite{nvidiatensorrtllm2026}. We use this documentation to establish the implementation context, not as evidence for a mathematical guarantee or a comparative performance claim.

A recent survey by Jiang et al. organizes system-aware KV-cache optimization into temporal execution and scheduling, spatial placement and migration, and structural representation and retention~\cite{jiang2026kvcache}. This taxonomy situates the mechanisms above without replacing their primary references. AID asks a different question: which information must a representation retain to preserve future service under specified actions and observations? These systems expose complementary parts of that state/action/dynamics problem; they are not presented as incorrect abstractions or as validation of a single universal state representation.

\subsection{Physical Coupling and Feedback}
TAPAS studies thermal- and power-aware scheduling for LLM inference, while DynamoLLM considers dynamic inference-cluster configuration under performance and energy objectives~\cite{stojkovic2025,stojkovic2025dynamo}. These systems demonstrate that computational decisions can interact with physical infrastructure variables.

Work on metastable distributed-system failures identifies amplification mechanisms and persistent low-goodput overload~\cite{huang2022metastable}. We therefore include client retries and service feedback when they affect the workload within the chosen system boundary. This connection motivates a test for persistent overload; it does not show that a particular inference workload is metastable.

\subsection{Temporal Structure, Interventions, and Constraints}
Queueing analysis and empirical network-traffic studies provide antecedents for modelling demand through variability and temporal dependence rather than a stationary average alone~\cite{kingman1962,leland1994}. Causal inference distinguishes observational conditioning from imposed interventions~\cite{pearl2009}. Model predictive control provides one possible decision layer, but feasibility and stability claims require controller-specific assumptions~\cite{rawlings2022}.

We bring these distinctions together by specifying the infrastructure state, available information, disturbance process, and service criterion. Results from queueing, causal inference, or control apply only when their assumptions also hold for the infrastructure model.

\section{The AID Formulation}
AID represents AI infrastructure as a dynamical system with state, actions, external disturbances, persistent configuration, observations, and service outputs:
\begin{equation}
\label{eq:aid-system}
\mathcal{S}=(\mathcal{X},\mathcal{A},\mathcal{D},\Theta,
\mathcal{O},\mathcal{R},f,h,g).
\end{equation}
Here, $\boldsymbol{x}_t\in\mathcal{X}$ denotes the system state at time $t$, $\boldsymbol{a}_t\in\mathcal{A}$ an action applied to the system, $\boldsymbol{d}_t\in\mathcal{D}$ an external disturbance, $\boldsymbol{o}_t\in\mathcal{O}$ the telemetry available to an observer, and $\boldsymbol{r}_t\in\mathcal{R}$ the resulting service output. The parameter $\boldsymbol{\theta}\in\Theta$ denotes persistent configuration, such as hardware, software, or deployment choices that remain fixed over the period being modeled.

For systems that admit a numerical state representation, the formulation can be written in the familiar state-space form
\begin{align}
\boldsymbol{x}_{t+1}&=f(\boldsymbol{x}_t,\boldsymbol{a}_t,
\boldsymbol{d}_t;\boldsymbol{\theta})+\boldsymbol{\epsilon}_t,
\label{eq:state-transition}\\
\boldsymbol{o}_t&=h(\boldsymbol{x}_t;\boldsymbol{\theta})+\boldsymbol{\nu}_t,
\label{eq:observation}\\
\boldsymbol{r}_t&=g(\boldsymbol{x}_t,\boldsymbol{a}_t,
\boldsymbol{d}_t;\boldsymbol{\theta})+\boldsymbol{\xi}_t.
\label{eq:service-output}
\end{align}
The terms $\boldsymbol{\epsilon}_t$, $\boldsymbol{\nu}_t$, and $\boldsymbol{\xi}_t$ represent process, observation, and service-output uncertainty, respectively. A discrete operating regime $z_t$ can also be included in the state, allowing the transition law to depend on the current regime, for example through $f_{z_t}$.

This numerical form is a shorthand rather than a restriction on the state representation. In particular, AID allows $\boldsymbol{x}_t$ to contain structured or variable-size objects such as request sets, queues, cache structures, placement relations, and other state that cannot naturally be represented by a fixed-dimensional vector. External client inputs are treated separately because, depending on the system boundary, they may be exogenous disturbances or may themselves depend on the service produced by the infrastructure.

\subsection{Structured States and Transition Laws}
The state space need not be $\mathbb{R}^n$ or a smooth manifold. It can contain numerical physical variables together with variable-length request records, cache trees, block-placement relations, queue structures, and other combinatorial state. Cache allocation, prefix reuse, and distributed KV placement provide concrete motivation for retaining structure beyond numerical occupancy~\cite{kwon2023,zheng2024,qin2025mooncake}. Disaggregated execution also motivates recording phase ownership and the resources involved in transferring request state~\cite{zhong2024distserve,patel2024splitwise}.

For general measurable state spaces, we describe evolution with a transition law over valid states:
\begin{equation}
\label{eq:structured-transition}
\boldsymbol{x}_{t+1}\sim
K_{\Delta_t}(\cdot\mid\boldsymbol{x}_t,\boldsymbol{a}_t,
\boldsymbol{d}_t;\boldsymbol{\theta}),
\qquad \Delta_t=\tau_{t+1}-\tau_t,
\end{equation}
where $\tau_t$ denotes physical time.

We treat a vector embedding as a proposed reduction of the state, whose adequacy must be tested. Operations such as insertion, eviction, and request progression change structured objects and need not have the form of additive Gaussian perturbations. Physical transients likewise require a noise model appropriate to the process being described.

We assume throughout that state, action, history, and output spaces are standard Borel spaces, and that the mappings and kernels are measurable. These assumptions allow us to define regular conditional distributions. We interpret conditional laws almost everywhere with respect to the corresponding history distribution. We use additive-noise notation only when vector addition is well defined and the resulting state remains admissible.

\subsection{Clock and Information Semantics}
A multi-rate model updates components at declared physical intervals or events. Slow states can evolve between measurements; holding their estimates constant does not make the underlying physical states static. This distinction is important when fast processes influence slower variables. For example, transient computational activity may alter power and heat generation before a lower-rate thermal measurement records the resulting change. Timescale separation therefore does not by itself justify eliminating fast transients. Multi-rate numerical integration is an established computational approach~\cite{sundialsarkode2026}, but selecting an integration scheme does not establish that a coarse representation preserves service-level quantities such as latency tails.

We denote the information available at physical time $\tau_t$ by the random history record $\mathcal{I}_t$. This record contains previous actions and sensor records received by that time, together with measurement times, validity information, and aggregation windows. Here $\mathcal{I}_t$ is a random record, not a sigma-algebra.

We obtain a candidate learned state from this record through a mapping $\phi$:
\begin{equation}
\label{eq:learned-state}
\boldsymbol{s}_t=\phi(\mathcal{I}_t;\boldsymbol{\theta}).
\end{equation}
This definition limits the representation to information available when the decision is made. A trace aligned after an experiment may include measurements that the online controller had not yet received. We therefore distinguish retrospective analysis from online state estimation. When uncertainty about latent state is decision-relevant, the learned representation may instead be a belief distribution. A point estimate need not preserve the uncertainty required for future decision-making~\cite{kaelbling1998}.

Figure~\ref{fig:aid-representation} separates the latent state, available history, and candidate representation used for prediction and decision-making.

\begin{figure}[tbp]
\centering
\begin{tikzpicture}[
  font=\small,
  aidbox/.style={draw, rounded corners=2pt, align=center, fill=black!3,
    text width=3.0cm, minimum height=3.8cm, inner sep=5pt},
  flow/.style={-{Latex[length=2mm]}, semithick}
]
\node[aidbox] (state) at (0,0) {\textbf{Latent state}\\$\boldsymbol{x}_t$\\[4pt]
  Cache structure\\Request progress\\Queue state\\Thermal state};
\node[aidbox] (history) at (3.9,0) {\textbf{Available history}\\$\mathcal{I}_t$\\[4pt]
  Received telemetry\\Timestamps / windows\\Previous actions};
\node[aidbox] (representation) at (7.8,0) {\textbf{Learned state}\\$\boldsymbol{s}_t=\phi(\mathcal{I}_t;\boldsymbol{\theta})$\\[4pt]
  Candidate summary\\or belief state};
\node[aidbox] (prediction) at (11.7,0) {\textbf{Prediction / decision}\\[4pt]
  Future service\\under specified\\actions and policies};
\draw[flow] (state.east) -- (history.west);
\draw[flow] (history.east) -- (representation.west);
\draw[flow] (representation.east) -- (prediction.west);
\node[align=center, text width=3.3cm, font=\footnotesize] at (1.95,2.65)
  {Partial observations\\received by $\tau_t$};
\node[align=center, text width=3.3cm, font=\footnotesize] at (5.85,2.65)
  {Representation mapping $\phi$};
\node[align=center, text width=3.3cm, font=\footnotesize] at (9.75,2.65)
  {Evaluate information\\preservation};
\node[align=center, text width=14cm, font=\footnotesize] at (5.85,-2.55)
  {Current state is not directly available; history contains only records received by $\tau_t$.};
\end{tikzpicture}
\caption{From latent infrastructure state to a candidate learned representation. The observation arrow summarizes a history of partial measurements, not a direct observation of the current state. The mapping $\phi$ uses only available information. Prediction under changed actions requires testing what the representation preserves; the diagram does not establish sufficiency or identify intervention responses from logs.}
\label{fig:aid-representation}
\end{figure}

\subsection{Demand Boundary}
In open-loop replay, arrivals are supplied without response to service. In closed-loop operation, client state and declared external inputs may instead determine realized demand.

Let $\boldsymbol{c}_t$ denote client-side predictive state and $\boldsymbol{e}_t$ declared external inputs:
\begin{align}
\boldsymbol{d}^{\mathrm{work}}_t
&\sim\rho(\cdot\mid\boldsymbol{c}_t,\boldsymbol{e}_t),
\label{eq:workload-arrivals}\\
\boldsymbol{c}_{t+1}
&\sim K_c(\cdot\mid\boldsymbol{c}_t,\boldsymbol{r}_t,
\boldsymbol{e}_t).
\label{eq:client-transition}
\end{align}
Retries and response-dependent session progression motivate this feedback extension. Related amplification mechanisms have been studied in distributed systems~\cite{huang2022metastable}. To obtain a Markov representation, we include client memory and any other disturbance history needed to determine the next-state distribution. Whether $\boldsymbol{e}_t$ is exogenous depends on the chosen system boundary. When comparing policies, we match external intent and client rules, rather than requiring identical realized arrivals. The arrivals may differ because each policy produces a different service response, which in turn changes subsequent client behavior.

\section{Representation Sufficiency and Learning Tasks}
Given an AID representation, we define the learning task for a fixed configuration, a finite horizon $H$, an available information record, a class of admissible policies $\Pi_{\mathrm{adm}}$, and a disturbance law. Policies are non-anticipating: an action can depend on information already available, but not on future observations. For a controlled reduction, we compare policies that use only the retained representation and output history, or prescribe the same action sequence. If a policy can use information omitted from the representation, it may choose different actions. That would change the comparison as well as the future service distribution.

Let $\boldsymbol{Y}_H$ collect future observations and declared service quantities. For a policy $\pi$, let $\mathcal{L}^{\pi}(\boldsymbol{Y}_H\mid\mathcal{I}_t)$ denote the corresponding conditional law under the specified model.

An intervention-relevant representation seeks
\begin{equation}
\label{eq:representation-sufficiency}
\mathcal{L}^{\pi}(\boldsymbol{Y}_H\mid\mathcal{I}_t)
\approx
\mathcal{L}^{\pi}(\boldsymbol{Y}_H\mid\boldsymbol{s}_t)
\quad\text{for every declared }\pi\in\Pi_{\mathrm{adm}}.
\end{equation}
To make this approximation testable, we specify how the two distributions are compared, the permitted error, the prediction horizon, and the workloads and operating conditions over which the comparison applies. This requirement is stronger than prediction under a single logging policy. A representation may accurately forecast future behavior under the policy used to collect its training data while discarding information needed when routing, batching, resource allocation, or power control changes.

We distinguish three objectives according to what information the representation must preserve and what can be inferred from the data:
\begin{description}
\item[Forecast sufficiency:] preserving future quantities under a specified observational or logging policy.
\item[Controlled-state sufficiency:] preserving the future service-relevant distribution under a declared class of admissible actions and policies.
\item[Intervention identification:] establishing the response to an intervention from available information under assumptions sufficient to justify that inference.
\end{description}
These objectives are related but not interchangeable. A representation can be forecast-sufficient without being controlled-state sufficient, and a predictive model can exist without the corresponding intervention response being identifiable from observational data~\cite{pearl2009}. In particular, $\mathcal{L}^{\pi}$ denotes a law under the specified model, not automatically a causal law identified from logs.

Candidate learning tasks include history-to-belief estimation~\cite{sarkka2023}, structured-state compression evaluated against controlled abstraction criteria~\cite{givan2003,li2006}, conditional trajectory modelling, and calibrated service-risk prediction.

For models intended to predict interventions, we evaluate both workloads and actions not used in training. Random splits of a trace collected under one policy do not directly test whether the model predicts the response to a different action. We also specify the service quantities used for training and evaluation. Mean latency, tail events, rejections, failures, energy consumption, and thermal excursions describe different aspects of service. An accurate prediction of one quantity does not by itself establish accuracy for the others.

\section{Elementary Analytical Diagnostics}
\label{sec:diagnostics}
We use two analytical results to characterize what information a representation can and cannot preserve. The first gives a prediction-error lower bound when the available history cannot distinguish models with different intervention responses. It follows from the total-variation triangle inequality. The second gives a sufficient condition for reducing a Markov state while preserving future service under a common policy, following established state-abstraction principles~\cite{givan2003,li2006}. These results provide tests of the assumptions behind an infrastructure representation. We apply established principles to the AID formulation rather than claim new general theorems in control or state abstraction.

\subsection{Ambiguity Limits Common Predictions}
For probability measures $P,Q$, define
\begin{equation}
\label{eq:total-variation}
 d_{\mathrm{TV}}(P,Q)=\sup_B |P(B)-Q(B)|,
\end{equation}
where the supremum is over measurable sets.
\begin{proposition}[Prediction error under observational ambiguity]
\label{prop:ambiguity}
Suppose two admissible models induce the same history law $\mu$ for $\mathcal{I}_t$. Under the same declared future intervention, let $P_1(\cdot\mid i)$ and $P_2(\cdot\mid i)$ be regular conditional laws for a common future-output variable, and let $Q(\cdot\mid i)$ be a common prediction kernel. Define the measurable functions
\begin{align}
\gamma(i)&=d_{\mathrm{TV}}\bigl(P_1(\cdot\mid i),P_2(\cdot\mid i)\bigr),
\label{eq:model-separation}\\
\eta_j(i)&=d_{\mathrm{TV}}\bigl(P_j(\cdot\mid i),Q(\cdot\mid i)\bigr),
\quad j\in\{1,2\}.
\label{eq:prediction-error}
\end{align}
Then, for $\mu$-almost every $i$,
\begin{equation}
\label{eq:pointwise-ambiguity-bound}
\max\{\eta_1(i),\eta_2(i)\}\geq\frac{\gamma(i)}{2}.
\end{equation}
Moreover,
\begin{equation}
\label{eq:expected-ambiguity-bound}
\max_{j\in\{1,2\}}\int\eta_j(i)\,\mu(di)
\geq\frac{1}{2}\int\gamma(i)\,\mu(di).
\end{equation}
\end{proposition}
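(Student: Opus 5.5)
The plan is to prove the pointwise bound \eqref{eq:pointwise-ambiguity-bound} from the fact that $d_{\mathrm{TV}}$ is a metric, and then obtain \eqref{eq:expected-ambiguity-bound} by integrating against the common history law $\mu$. No earlier result of the paper is needed beyond the definition \eqref{eq:total-variation}.

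\textbf{Step 1: the triangle inequality.} Fix a history value $i$. For any measurable set $B$ in the future-output space,
\[
|P_1(B\mid i)-P_2(B\mid i)|\le |P_1(B\mid i)-Q(B\mid i)|+|Q(B\mid i)-P_2(B\mid i)|\le \eta_1(i)+\eta_2(i).
\]
Taking the supremum over $B$ gives $\gamma(i)\le\eta_1(i)+\eta_2(i)$. Since $\eta_1(i)+\eta_2(i)\le 2\max\{\eta_1(i),\eta_2(i)\}$, the pointwise bound \eqref{eq:pointwise-ambiguity-bound} follows.

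The kernels $P_1,P_2,Q$ are only specified up to $\mu$-null sets, so the inequality is asserted only $\mu$-almost everywhere. In fact it holds at every $i$ for any fixed versions of the kernels.

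\textbf{Step 2: the integrated bound.} Here the hypothesis that both models induce the same history law $\mu$ is what matters: both expected errors $\int\eta_j\,d\mu$ are taken against the same measure. Integrating $\gamma\le\eta_1+\eta_2$ against $\mu$ gives
\[
\int\gamma\,d\mu\le\int\eta_1\,d\mu+\int\eta_2\,d\mu\le 2\max_j\int\eta_j\,d\mu.
\]
This is \eqref{eq:expected-ambiguity-bound}. We integrate the sum rather than the pointwise maximum, so we never need to interchange the maximum with the integral. All integrands are bounded in $[0,1]$, so the integrals are finite.

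\textbf{Main obstacle: measurability.} The only delicate point is that $\gamma$ and $\eta_j$ must be measurable in $i$ for the integrals to make sense. The statement posits this ("define the measurable functions"), so I would take it as given. If it needed justification, I would use the standing assumption that the output space is standard Borel. That space then has a countable generating algebra $\mathcal{B}_0$, and the supremum in \eqref{eq:total-variation} can be restricted to $B\in\mathcal{B}_0$ by a monotone class argument. Each map $i\mapsto P_j(B\mid i)$ is measurable because the kernels are regular conditional laws. A countable supremum of measurable functions is measurable.
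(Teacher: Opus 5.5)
Your proof is correct and follows the paper's own argument: the total-variation triangle inequality gives $\gamma(i)\le\eta_1(i)+\eta_2(i)$, which yields the pointwise bound, and integrating that same sum inequality against the common law $\mu$ yields the expected bound. Your extra remarks are sound refinements the paper leaves implicit: for fixed versions of the kernels the inequality holds at every $i$, and measurability follows from a countable generating algebra.
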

\begin{proof}
For the chosen versions of the conditional laws, the triangle inequality gives
\begin{equation}
\label{eq:ambiguity-triangle}
\gamma(i)\leq\eta_1(i)+\eta_2(i)
\end{equation}
almost everywhere. At least one term is therefore at least $\gamma(i)/2$. Integrating the same inequality gives
\begin{equation}
\label{eq:integrated-triangle}
\int\gamma\,d\mu\leq\int\eta_1\,d\mu+\int\eta_2\,d\mu,
\end{equation}
so at least one expected error is at least half the integrated separation.
\end{proof}
The interpretation is an information limit. If the available information cannot distinguish two admissible models that respond differently to the same intervention, then a predictor that assigns the same conditional distribution to that information must incur nonzero error for at least one model whenever the integrated separation is positive.

This does not imply that similar telemetry snapshots establish observational equivalence. Additional sensors, longer histories, or sufficiently informative interventions may distinguish the models. The result instead identifies a condition under which the available information is insufficient.

As a simple illustration, suppose telemetry records only cache occupancy. Two latent cache states can have identical occupancy and observed history while containing different prefixes. Under a common action requesting one of those prefixes, a binary cache-hit output can have different conditional laws in the two states. In the deterministic construction, one state produces a hit and the other a miss, giving $\gamma=1$ and a worst-case total-variation prediction error of at least $1/2$. The example is purely mathematical: it illustrates observational ambiguity rather than making an empirical claim about a particular cache implementation.

Figure~\ref{fig:aid-cache-aliasing} illustrates this construction and the bound in Proposition~\ref{prop:ambiguity}.

\begin{figure}[tbp]
\centering
\begin{tikzpicture}[
  font=\small,
  aidbox/.style={draw, rounded corners=2pt, align=center, fill=black!3,
    text width=2.9cm, minimum height=0.9cm, inner sep=5pt},
  flow/.style={-{Latex[length=2mm]}, semithick}
]
\node[aidbox] (firststate) at (0,1.2) {\textbf{Hidden state A}\\Cache:\\prefix A};
\node[aidbox] (secondstate) at (0,-1.2) {\textbf{Hidden state B}\\Cache:\\prefix B};
\node[aidbox, minimum height=2.1cm] (observation) at (3.8,0)
  {\textbf{Same history}\\Occupancy: one block\\No prefix identity};
\node[aidbox, minimum height=2.1cm] (action) at (7.6,0)
  {\textbf{Same action}\\Request prefix A};
\node[aidbox] (hit) at (11.4,1.2) {\textbf{In state A: hit}\\$P_1=\delta_1$};
\node[aidbox] (miss) at (11.4,-1.2) {\textbf{In state B: miss}\\$P_2=\delta_0$};
\draw[flow] (firststate.east) -- (observation.north west);
\draw[flow] (secondstate.east) -- (observation.south west);
\draw[flow] (observation.east) -- (action.west);
\draw[flow] (action.north east) -- (hit.west);
\draw[flow] (action.south east) -- (miss.west);
\node[align=center, text width=13.7cm, font=\footnotesize] at (5.7,-2.65)
  {For the binary hit indicator: $d_{\mathrm{TV}}(P_1,P_2)=1$, so any common predictor has worst-case error at least $1/2$.};
\end{tikzpicture}
\caption{Observational ambiguity in a deterministic cache example. The two admissible hidden states have the same occupancy-only history but different responses to the same imposed action. Here $\delta_1$ and $\delta_0$ are point masses for a binary cache-hit indicator. The action is prescribed in both states; equal telemetry does not imply equal response. This is a mathematical construction, not an empirical cache result.}
\label{fig:aid-cache-aliasing}
\end{figure}

\subsection{Exact Controlled Reduction}
Let $X_t$ be a Markov state of the augmented facility--client system at a fixed configuration, including clocks, predictive memory, and latent regimes as necessary. Here $X_t$, $A_t$, and $R_t$ denote the random state, action, and service output represented elsewhere by $\boldsymbol{x}_t$, $\boldsymbol{a}_t$, and $\boldsymbol{r}_t$. Let $\psi$ be a measurable reduction to a standard Borel space.

We describe the next reduced state and current service output through their joint one-step distribution under an imposed action. This structural kernel incorporates the specified external-input law and is defined for every admissible state--action pair, including pairs absent from the logging data. The conditional notation in the proposition refers to this structural law, rather than a response inferred automatically from logs.

\begin{proposition}[Exact controlled reduction]
\label{prop:reduction}
Suppose there exists a kernel $\bar K$ such that, for every full state $x$, admissible action $a$, and measurable sets $B,C$,
\begin{equation}
\label{eq:controlled-reduction}
\Pr\{\psi(X_{t+1})\in B,\ R_t\in C\mid X_t=x,A_t=a\}
=\bar K(B\times C\mid\psi(x),a).
\end{equation}
Then, under any common non-anticipating policy using only reduced-state and output history, started with the same policy memory, two initial states with the same reduction generate the same joint reduced-state and service-output trajectory law at every finite horizon.
\end{proposition}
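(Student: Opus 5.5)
The plan is an induction on the horizon $n$. Let $Y_k=\psi(X_k)$ and let the policy memory at time $k$ be $M_k$. The policy is non-anticipating and uses only reduced-state and output history. So we may take $A_k\sim\pi_k(\cdot\mid Y_0,R_0,\dots,Y_{k-1},R_{k-1},Y_k;M_0)$, with any internal randomization independent of the full state given this history. The object to control is the joint law $P^x_n$ of the reduced trajectory $(Y_0,A_0,R_0,\dots,Y_{n-1},A_{n-1},R_{n-1},Y_n)$ started from $X_0=x$. I include the actions, since they are functions of this history plus independent randomization. The claim is that $P^x_n=P^{x'}_n$ whenever $\psi(x)=\psi(x')$.

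The key step is a one-step factorization of the trajectory law that only ever evaluates $\bar K$. By the Markov property of $X_t$ under the imposed action, the conditional law of $(X_{k+1},R_k)$ given the full past and $A_k=a$ depends only on $(X_k,a)$. The hypothesis \eqref{eq:controlled-reduction} then says that its $(\psi,\mathrm{id})$-image is $\bar K(\cdot\mid Y_k,a)$. Write $\mathcal{F}_k$ for the full history, i.e. the full states together with actions, outputs and policy randomization. By the tower property, for measurable $B,C$,
\begin{equation*}
\Pr\{Y_{k+1}\in B,\ R_k\in C\mid \mathcal{F}_k, A_k\}=\bar K(B\times C\mid Y_k,A_k)\quad\text{a.s.}
\end{equation*}
Since the right side is measurable with respect to the reduced history, the same identity holds conditional on the coarser reduced history $h_k=(Y_0,A_0,R_0,\dots,Y_k,A_k)$.

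The induction then runs as follows. The base case is immediate: $P^x_0=\delta_{\psi(x)}=P^{x'}_0$. For the inductive step, write
\begin{equation*}
P^x_{n+1}(dh_n\,dr_n\,dy_{n+1})=P^x_n(dh_{n-1}\,dy_n)\,\pi_n(da_n\mid h_{n-1},y_n;M_0)\,\bar K(dy_{n+1}\,dr_n\mid y_n,a_n).
\end{equation*}
This is a composition of a measure with kernels, and it is well defined because all spaces are standard Borel. By the inductive hypothesis the first factor agrees for $x$ and $x'$. The remaining factors do not involve the full state at all. Hence $P^x_{n+1}=P^{x'}_{n+1}$, and the claimed equality of joint reduced-state and output trajectory laws follows by marginalizing out the actions.

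The main obstacle is the conditioning step that replaces the full-history conditional law with the reduced-history one. The concern is that the policy's action could carry hidden information about $X_k$ beyond $Y_k$, correlated with its future. Non-anticipation together with dependence only on reduced history rules this out. Formally, I would construct the process on a canonical product space via Ionescu--Tulcea, using the full kernel and the policy kernels. I would then verify the displayed identity by integrating indicator functions of cylinder sets of the reduced history, applying \eqref{eq:controlled-reduction} pointwise in $x$. The identity holds almost surely, and this is all the composition formula needs.
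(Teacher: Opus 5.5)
Your proposal is correct and follows the same route as the paper: induction on the horizon, where the common policy gives the same action law from the same reduced history, $\bar K$ gives the same one-step law of $(\psi(X_{t+1}),R_t)$, and the result is integrated against the common history; your tower-property step and Ionescu--Tulcea construction make explicit what the paper's short proof leaves implicit. There is one notational slip: in your factorization, the first factor and the policy kernel should also carry $r_{n-1}$, i.e.\ $P^x_n(dh_{n-1}\,dr_{n-1}\,dy_n)$ and $\pi_n(da_n\mid h_{n-1},r_{n-1},y_n;M_0)$.
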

\begin{proof}
The initial reduced states and policy memories agree. Given the same reduced history, the common policy selects the same action distribution. The assumed kernel then gives the same conditional distribution for the next reduced state and service output. Integrating over the common history establishes equality of the one-step extension; induction gives equality of the joint trajectory law at every finite horizon.
\end{proof}
The condition requires the joint law of the next reduced state and service output to depend on the current state only through $\psi(x)$ and the imposed action. It is closely related to established state-equivalence and abstraction principles~\cite{givan2003,li2006}; the induction proof applies these principles rather than introducing a new bisimulation theorem.

The condition is stronger than preserving next-state means or aggregate occupancy. It requires the relevant joint transition and service-output distribution to agree for every admissible state--action pair, including pairs not encountered under the logging policy. It is therefore a sufficient condition for controlled reduction, not a claim that such a reduction can be identified from observational data alone. The comparison uses the specified disturbance law and common policy memory; policies that use hidden full-state information fall outside its scope.

\section{Validation Protocol}
\label{sec:validation}
The formulation leads to a set of controlled tests of whether candidate representations preserve the information required for prediction and decision-making under declared interventions. The protocol is intended to test the assumptions and distinctions introduced by AID; no empirical results are reported here.

Candidate comparisons include:
\begin{itemize}
\item explicit cache structures versus occupancy-only and enriched summaries;
\item instantaneous arrival descriptors versus temporal histories;
\item timestamp-aware online estimates versus retrospectively aligned features;
\item point state estimates versus uncertainty-aware representations;
\item open-loop workload replay versus response-dependent clients;
\item observational forecasts versus forecasts under imposed actions.
\end{itemize}
Experiments should preserve request and infrastructure versions, reset conditions, measurement availability, and offered-request accounting. Forecast tests should be separated from imposed-action tests. A model that predicts future behavior under the logging policy has not thereby demonstrated that it predicts the response to a changed action.

Calibration should be evaluated using both coverage and sharpness across the declared horizon and operating domain. Negative results should identify which intervention, workload, or operating regime breaks a proposed reduction rather than being absorbed into a single aggregate score.

A reference environment can make the learning task reproducible without establishing universal adequacy. Evaluation across other configurations is required before claiming transfer. Regime switching and retry persistence remain hypotheses until supported by controlled evidence.

\section{Scope and Limitations}
This paper presents a formulation and validation methodology rather than a learned representation, estimator, or empirically validated facility model. The relationship between a candidate representation and the state of a real infrastructure system must be established through measurement and controlled evaluation. The exact reduction condition in Section~\ref{sec:diagnostics} is a sufficient condition, not a practical claim that real infrastructure admits an exact low-dimensional representation. Approximate reductions require an explicit discrepancy measure, prediction horizon, and error tolerance, together with analysis of how error changes over the horizon.

The ambiguity result depends on the model class and information available to the observer. Similar telemetry does not by itself establish observational equivalence, and additional measurements or interventions may resolve an apparent ambiguity. Intervention identification requires assumptions or experimental evidence sufficient to support the corresponding causal claim~\cite{pearl2009}. The framework does not by itself guarantee identification from observational logs.

Likewise, AID does not establish service-level objective (SLO) guarantees, energy optimality, or safe control. Such claims require explicit service populations, disturbance laws, constraints, and controller-specific analysis~\cite{rawlings2022}. Finally, the adequacy of any representation may depend on hardware, workloads, client behavior, and operating regime. A representation validated under one configuration should not be assumed to transfer to another without evidence.

\section{Conclusion}
AID formulates the problem of learning representations for AI infrastructure as a question of information preservation under future service and admissible actions. The framework distinguishes telemetry from latent state, forecasting from imposed-action prediction, supplied arrivals from response-dependent demand, and numerical embeddings from structured state. The central question is not whether a representation reconstructs the measurements that are easiest to log. It is whether the representation preserves the information required to predict future service under the decisions the system may actually take.

The resulting formulation provides explicit objects, assumptions, analytical diagnostics, and validation criteria for investigating that question. It does not assume that commonly exposed infrastructure metrics constitute a sufficient state, nor that predictive accuracy under an existing policy establishes intervention-relevant adequacy. AID therefore provides a basis for making representation claims in AI infrastructure explicit, testable, and falsifiable.

\bibliographystyle{plain}
\bibliography{references}

@article{kingman1962,
  author = {Kingman, J. F. C.},
  title = {On Queues in Heavy Traffic},
  journal = {Journal of the Royal Statistical Society: Series B (Methodological)},
  volume = {24},
  number = {2},
  pages = {383--392},
  year = {1962},
  doi = {10.1111/j.2517-6161.1962.tb00465.x},
  note = {\url{https://doi.org/10.1111/j.2517-6161.1962.tb00465.x}}
}

@article{leland1994,
  author = {Leland, Will E. and Taqqu, Murad S. and Willinger, Walter and Wilson, Daniel V.},
  title = {On the Self-Similar Nature of {Ethernet} Traffic (Extended Version)},
  journal = {IEEE/ACM Transactions on Networking},
  volume = {2},
  number = {1},
  pages = {1--15},
  year = {1994},
  doi = {10.1109/90.282603},
  note = {\url{https://doi.org/10.1109/90.282603}}
}

@article{kaelbling1998,
  author = {Kaelbling, Leslie Pack and Littman, Michael L. and Cassandra, Anthony R.},
  title = {Planning and Acting in Partially Observable Stochastic Domains},
  journal = {Artificial Intelligence},
  volume = {101},
  number = {1--2},
  pages = {99--134},
  year = {1998},
  doi = {10.1016/S0004-3702(98)00023-X},
  note = {\url{https://doi.org/10.1016/S0004-3702(98)00023-X}}
}

@article{pearl2009,
  author = {Pearl, Judea},
  title = {Causal Inference in Statistics: An Overview},
  journal = {Statistics Surveys},
  volume = {3},
  pages = {96--146},
  year = {2009},
  doi = {10.1214/09-SS057},
  note = {\url{https://doi.org/10.1214/09-SS057}}
}

@inproceedings{kwon2023,
  author = {Kwon, Woosuk and Li, Zhuohan and Zhuang, Siyuan and Sheng, Ying and Zheng, Lianmin and Yu, Cody Hao and Gonzalez, Joseph E. and Zhang, Hao and Stoica, Ion},
  title = {Efficient Memory Management for Large Language Model Serving with {PagedAttention}},
  booktitle = {Proceedings of the 29th Symposium on Operating Systems Principles},
  series = {SOSP '23},
  pages = {611--626},
  publisher = {Association for Computing Machinery},
  year = {2023},
  doi = {10.1145/3600006.3613165},
  note = {\url{https://doi.org/10.1145/3600006.3613165}}
}

@inproceedings{agrawal2024,
  author = {Agrawal, Amey and Kedia, Nitin and Panwar, Ashish and Mohan, Jayashree and Kwatra, Nipun and Gulavani, Bhargav and Tumanov, Alexey and Ramjee, Ramachandran},
  title = {Taming {Throughput-Latency} Tradeoff in {LLM} Inference with {Sarathi-Serve}},
  booktitle = {18th USENIX Symposium on Operating Systems Design and Implementation},
  series = {OSDI 24},
  pages = {117--134},
  publisher = {USENIX Association},
  year = {2024},
  url = {https://www.usenix.org/conference/osdi24/presentation/agrawal},
  note = {\url{https://www.usenix.org/conference/osdi24/presentation/agrawal}}
}

@inproceedings{zheng2024,
  author = {Zheng, Lianmin and Yin, Liangsheng and Xie, Zhiqiang and Sun, Chuyue and Huang, Jeff and Yu, Cody Hao and Cao, Shiyi and Kozyrakis, Christos and Stoica, Ion and Gonzalez, Joseph E. and Barrett, Clark and Sheng, Ying},
  title = {{SGLang}: Efficient Execution of Structured Language Model Programs},
  booktitle = {Advances in Neural Information Processing Systems},
  volume = {37},
  year = {2024},
  doi = {10.52202/079017-2000},
  note = {\url{https://doi.org/10.52202/079017-2000}}
}

@inproceedings{stojkovic2025,
  author = {Stojkovic, Jovan and Zhang, Chaojie and Goiri, {\'I}\~{n}igo and Choukse, Esha and Qiu, Haoran and Fonseca, Rodrigo and Torrellas, Josep and Bianchini, Ricardo},
  title = {{TAPAS}: Thermal- and Power-Aware Scheduling for {LLM} Inference in Cloud Platforms},
  booktitle = {Proceedings of the 30th ACM International Conference on Architectural Support for Programming Languages and Operating Systems},
  series = {ASPLOS '25},
  volume = {2},
  pages = {1266--1281},
  publisher = {Association for Computing Machinery},
  year = {2025},
  doi = {10.1145/3676641.3716025},
  note = {\url{https://doi.org/10.1145/3676641.3716025}}
}

@book{sarkka2023,
  author = {S{\"a}rkk{\"a}, Simo and Svensson, Lennart},
  title = {Bayesian Filtering and Smoothing},
  edition = {2},
  publisher = {Cambridge University Press},
  year = {2023},
  doi = {10.1017/9781108917407},
  note = {\url{https://doi.org/10.1017/9781108917407}}
}

@book{rawlings2022,
  author = {Rawlings, James B. and Mayne, David Q. and Diehl, Moritz M.},
  title = {Model Predictive Control: Theory, Computation, and Design},
  edition = {2},
  publisher = {Nob Hill Publishing},
  year = {2022},
  note = {Fourth electronic printing, April 2022. \href{https://sites.engineering.ucsb.edu/~jbraw/mpc/MPC-book-2nd-edition-4th-printing.pdf}{Author-hosted text}}
}

@inproceedings{yu2022,
  author = {Yu, Gyeong-In and Jeong, Joo Seong and Kim, Geon-Woo and Kim, Soojeong and Chun, Byung-Gon},
  title = {{Orca}: A Distributed Serving System for {Transformer-Based} Generative Models},
  booktitle = {16th USENIX Symposium on Operating Systems Design and Implementation},
  series = {OSDI '22},
  pages = {521--538},
  publisher = {USENIX Association},
  year = {2022},
  note = {\url{https://www.usenix.org/conference/osdi22/presentation/yu}}
}

@inproceedings{gujarati2020,
  author = {Gujarati, Arpan and Karimi, Reza and Alzayat, Safya and Hao, Wei and Kaufmann, Antoine and Vigfusson, Ymir and Mace, Jonathan},
  title = {Serving {DNNs} like {Clockwork}: Performance Predictability from the Bottom Up},
  booktitle = {14th USENIX Symposium on Operating Systems Design and Implementation},
  series = {OSDI '20},
  pages = {443--462},
  publisher = {USENIX Association},
  year = {2020},
  note = {\url{https://www.usenix.org/conference/osdi20/presentation/gujarati}}
}

@inproceedings{stojkovic2025dynamo,
  author = {Stojkovic, Jovan and Zhang, Chaojie and Goiri, {\'I}\~{n}igo and Torrellas, Josep and Choukse, Esha},
  title = {{DynamoLLM}: Designing {LLM} Inference Clusters for Performance and Energy Efficiency},
  booktitle = {2025 IEEE International Symposium on High Performance Computer Architecture},
  series = {HPCA},
  pages = {1348--1362},
  publisher = {IEEE},
  year = {2025},
  doi = {10.1109/HPCA61900.2025.00102},
  note = {\url{https://doi.org/10.1109/HPCA61900.2025.00102}}
}

@misc{sundialsarkode2026,
  author = {{SUNDIALS Project}},
  title = {{ARKODE}: Introduction},
  howpublished = {Official SUNDIALS documentation, version 7.9.0},
  note = {Accessed October 3, 2026. \href{https://sundials.readthedocs.io/en/v7.9.0/arkode/Introduction_link.html}{Official documentation}}
}

@article{givan2003,
  author = {Givan, Robert and Dean, Thomas and Greig, Matthew},
  title = {Equivalence Notions and Model Minimization in {Markov} Decision Processes},
  journal = {Artificial Intelligence},
  volume = {147},
  number = {1--2},
  pages = {163--223},
  year = {2003},
  doi = {10.1016/S0004-3702(02)00376-4},
  note = {\url{https://doi.org/10.1016/S0004-3702(02)00376-4}}
}

@inproceedings{li2006,
  author = {Li, Lihong and Walsh, Thomas J. and Littman, Michael L.},
  title = {Towards a Unified Theory of State Abstraction for {MDPs}},
  booktitle = {Proceedings of the Ninth International Symposium on Artificial Intelligence and Mathematics},
  year = {2006},
  note = {Author-hosted paper: \href{https://thomasjwalsh.net/pub/aima06Towards.pdf}{Full text}}
}

@inproceedings{huang2022metastable,
  author = {Huang, Lexiang and Magnusson, Matthew and Muralikrishna, Abishek Bangalore and Estyak, Salman and Isaacs, Rebecca and Aghayev, Abutalib and Zhu, Timothy and Charapko, Aleksey},
  title = {Metastable Failures in the Wild},
  booktitle = {16th USENIX Symposium on Operating Systems Design and Implementation},
  series = {OSDI '22},
  pages = {73--90},
  publisher = {USENIX Association},
  year = {2022},
  note = {\url{https://www.usenix.org/conference/osdi22/presentation/huang-lexiang}}
}

@inproceedings{zhong2024distserve,
  author = {Zhong, Yinmin and Liu, Shengyu and Chen, Junda and Hu, Jianbo and Zhu, Yibo and Liu, Xuanzhe and Jin, Xin and Zhang, Hao},
  title = {{DistServe}: Disaggregating Prefill and Decoding for Goodput-optimized Large Language Model Serving},
  booktitle = {18th USENIX Symposium on Operating Systems Design and Implementation},
  series = {OSDI '24},
  pages = {193--210},
  publisher = {USENIX Association},
  year = {2024},
  url = {https://www.usenix.org/conference/osdi24/presentation/zhong-yinmin},
  note = {\url{https://www.usenix.org/conference/osdi24/presentation/zhong-yinmin}}
}

@inproceedings{patel2024splitwise,
  author = {Patel, Pratyush and Choukse, Esha and Zhang, Chaojie and Shah, Aashaka and Goiri, {\'I}\~{n}igo and Maleki, Saeed and Bianchini, Ricardo},
  title = {{Splitwise}: Efficient Generative {LLM} Inference Using Phase Splitting},
  booktitle = {2024 ACM/IEEE 51st Annual International Symposium on Computer Architecture},
  series = {ISCA},
  year = {2024},
  note = {Author-hosted paper: \href{https://www.microsoft.com/en-us/research/wp-content/uploads/2023/12/Splitwise_ISCA24.pdf}{Full text}}
}

@inproceedings{qin2025mooncake,
  author = {Qin, Ruoyu and Li, Zheming and He, Weiran and Cui, Jialei and Ren, Feng and Zhang, Mingxing and Wu, Yongwei and Zheng, Weimin and Xu, Xinran},
  title = {{Mooncake}: Trading More Storage for Less Computation---A {KVCache}-centric Architecture for Serving {LLM} Chatbot},
  booktitle = {23rd USENIX Conference on File and Storage Technologies},
  series = {FAST '25},
  pages = {155--170},
  publisher = {USENIX Association},
  year = {2025},
  url = {https://www.usenix.org/conference/fast25/presentation/qin},
  note = {\url{https://www.usenix.org/conference/fast25/presentation/qin}}
}

@misc{nvidiatensorrtllm2026,
  author = {{NVIDIA}},
  title = {{TensorRT-LLM}: Overview},
  howpublished = {Official project documentation},
  year = {2026},
  note = {Accessed October 3, 2026. \href{https://nvidia.github.io/TensorRT-LLM/overview.html}{Official documentation}}
}

@inproceedings{jiang2026kvcache,
  author = {Jiang, Jiantong and Yang, Peiyu and Zhang, Rui and Liu, Feng},
  title = {Towards Efficient Large Language Model Serving: A Survey on System-Aware {KV} Cache Optimization},
  booktitle = {Findings of the Association for Computational Linguistics: ACL 2026},
  pages = {38450--38476},
  publisher = {Association for Computational Linguistics},
  year = {2026},
  doi = {10.18653/v1/2026.findings-acl.1916},
  url = {https://aclanthology.org/2026.findings-acl.1916/},
  note = {\url{https://doi.org/10.18653/v1/2026.findings-acl.1916}}
}
\end{document}